\newtheorem{theo}{Theorem}
\newtheorem{cor}{Corollary}
\theoremstyle{definition}
\newtheorem{defin}{Definition}
\theoremstyle{remark}
\newcommand{\tc}{TestCover($p$)}
\newcommand{\tcp}{TestCover($p,k$)}
\newcommand{\tcpk}{TestCover($k,k$)}
\newcommand{\tcpr}{Test-$r$-Cover($p,k$)}
\newcommand{\tcprk}{Test-$r$-Cover($k,k$)}
\newcommand{\tcpn}{TestCover($n-k,k$)}
\begin{document}

\title{(Non-)existence of Polynomial Kernels for the Test Cover Problem}

\author{
 G. Gutin, G. Muciaccia\\Royal Holloway, University of London, Egham TW20 0EX, UK \and A. Yeo\\ University of Johannesburg,
Auckland Park, 2006 South Africa
}
\date{}
\maketitle

\begin{abstract}
The input of the Test Cover problem consists of a set $V$ of vertices,
and a collection ${\cal E}=\{E_1,\ldots , E_m\}$ of distinct subsets of $V$, called tests. A test
$E_q$ separates a pair $v_i,v_j$ of vertices if $|\{v_i,v_j\}\cap E_q|=1.$ A subcollection ${\cal T}\subseteq {\cal E}$
is a test cover if each pair $v_i,v_j$ of distinct vertices is separated by a test in ${\cal T}$.
The objective is to find a test cover of minimum cardinality, if one exists. This problem is NP-hard.

We consider two parameterizations the Test Cover problem with parameter $k$: (a) decide whether there is a test cover with at most $k$ tests, (b) decide whether there is a test cover with at most $|V|-k$ tests. Both parameterizations are known to be fixed-parameter tractable. We prove that none have a polynomial size kernel unless $NP\subseteq coNP/poly$. Our proofs use the cross-composition method recently introduced by Bodlaender et al. (2011) and parametric duality introduced by Chen et al. (2005). The result for the parameterization (a) was an open problem (private communications with Henning Fernau and Jiong Guo, Jan.-Feb. 2012). We also show that the parameterization (a) admits a polynomial size kernel if the size of each test is upper-bounded by a constant.
\end{abstract}

\section{Introduction}
In the Test Cover problem defined below we are given a set $V=\{v_1,\ldots, v_n\}$ of element called {\em vertices} and
a family  $\mathcal{E}=\{E_1,\dots,E_m\}$ of distinct subsets of $V$ called {\em tests}. We say that a test
$E_q$ {\em separates} a pair $v_i,v_j$ of vertices if $|\{v_i,v_j\}\cap E_q|=1.$
A subset $\mathcal{T}$ of $\mathcal{E}$ is called a {\em test cover}
if for every pair of distinct vertices $v_i,v_j$ there exists a test $T\in\mathcal{T}$ separating them.

\begin{center}
\fbox{~\begin{minipage}{0.9\textwidth}
{\sc TestCover}($p$)

\textbf{Input:} A set  $V=\{v_1,\ldots, v_n\}$ of vertices, a family  $\mathcal{E}=\{E_1,\dots,E_m\}$ of tests and an integer $p$.

\textbf{Task:} Decide whether there exists a test cover with at most $p$ tests.
\end{minipage}~}
\end{center}

The (optimization version of) Test Cover problem arises naturally in the following general setting of
identification problems: Given a set of items (i.e., vertices) and a set
of binary attributes that may or may not occur in each item, the aim
is to find a minimum-size subset of attributes (a minimum test cover) such that each
items can be uniquely identified from the information on the subset
of attributes it contains.
The Test Cover problem arises in fault analysis, medical diagnostics, pattern recognition, and
biological  identification (see, e.g., \cite{HalHalRav01,HalMinRav01,MorSha85}).

The {\sc TestCover}($p$) problem is NP-hard, as was shown by Garey and Johnson \cite{GarJoh79}. Moreover,
{\sc TestCover}($p$) is APX-hard \cite{HalHalRav01}. There is an $O(\log n)$-approximation algorithm for the problem \cite{MorSha85}
and there is no $o(\log n)$-approximation algorithm unless P=NP \cite{HalHalRav01}.

The generic parameterized version \tcp\space of {\sc TestCover}($p$) is defined as follows.

\begin{center}
\fbox{~\begin{minipage}{0.9\textwidth}
{\sc TestCover}$(p,k)$

\textbf{Input:} A set $V=\{v_1,\dots,v_n\}$, a family $\mathcal{E}=\{E_1,\dots,E_m\}$ of subsets of $V$ and two integers $p$ and $k$.

\textbf{Parameter:} $k$.

\textbf{Task:} Decide whether there exists a test cover with at most $p$ tests.
\end{minipage}~}
\end{center}

In this paper, we study two parameterizations of {\sc TestCover}: the standard parameterization {\sc TestCover}$(k,k)$ and its dual\footnote{See Definition \ref{def:dual} for a formal notion of parameterized duality.} {\sc TestCover}$(n-k,k)$.
Both parameterizations are known to be fixed-parameter tractable \cite{CroGutJonSauYeo}.
Fixed-parameter tractability of {\sc TestCover}$(k,k)$ is easy to see using the fact that there is no test cover of size less than $\lceil \log n\rceil$ \cite{HalHalRav01}. Indeed, if $k< \log n$ then the corresponding instance of {\sc TestCover}$(k,k)$ is a {\sc No}-instance, and if $k\ge \log n$ then $n\le 2^k$ and so $m\le 2^{2^k}$ allowing us to solve {\sc TestCover}$(k,k)$ by a brute force fixed-parameter algorithm.
The proof in \cite{CroGutJonSauYeo} that {\sc TestCover}$(n-k,k)$ is fixed-parameter tractable is much harder.

We first prove that {\sc TestCover}$(k,k)$ does not admit a polynomial kernel unless $NP\subseteq coNP/poly$. The latter is deemed unlikely as it is known to imply a collapse of the polynomial hierarchy to its third level \cite{Yap}. Note that our first result solves an open problem \cite{Fer,Guo}. Our proof uses the cross-composition method recently introduced by Bodlaender et al. \cite{BodJanKra}.

We show that {\sc TestCover}$(k,k)$ does admit a polynomial kernel if the size of each test is bounded by a constant. This constraint is used in some practical applications of {\sc TestCover} \cite{HalHalRav01}.  We prove that {\sc TestCover}$(n-k,k)$ does not admit a polynomial kernel unless $NP\subseteq coNP/poly$. This result follows from our first result and a general result on nonexistence of polynomial kernels in dual parameterized problems also obtained in our paper.
We use the notion of dual introduced by Chen et al. \cite{CheFerKanXia} as well as the cross-composition method.

Our paper is organized as follows. In the rest of this section we will give a couple of simple, yet very useful, definitions on {\sc TestCover}.
In the next section we give basics on parameterized algorithms and kernelization as well as on the cross-composition method. In Section \ref{sec:res}, we prove all our results. In Section \ref{sec:op} we pose some open problems.

A {\em partial test cover} $\mathcal{T}'$ to an instance of \tcp\space is a subset of $\mathcal{E}$ of cardinality at most $p$.
We say that $C\subseteq V$ is a {\em class} induced by a partial test cover $\mathcal{T}'$, if $C$ is a maximal set such that there does not exist
a test in $\mathcal{T}'$ which separates two vertices of $C$. Notice that the classes induced by every partial test cover form a partition of $V$ and every test cover induces $n$ classes. For a positive integer $p$, let $[p]=\{1,2,\ldots ,p\}.$

\section{Fixed-Parameter Algorithms, Kernels and Cross-composition Method}

A \emph{parameterized problem} $P$ is a subset $P\subseteq \Sigma^* \times
\mathbb{N}$ over a finite alphabet $\Sigma$. $P$ is
\emph{fixed-parameter tractable} if the membership of an instance
$(x,k)$ of $\Sigma^* \times \mathbb{N}$ in $P$ can be decided by an algorithm of runtime
$f(k)|x|^{O(1)}$, where $f$ is a function of the
{\em parameter} $k$ only (such an algorithm is a \emph{fixed-parameter} algorithm)
\cite{DowneyFellows99,FlumGrohe06,Niedermeier06}.
Given a parameterized problem $P$,
a \emph{kernelization of $P$} is a polynomial-time
algorithm that maps an instance $(x,k)$ to an instance $(x',k')$ (the
\emph{kernel}) such that (i)~$(x,k)\in P$ if and only if
$(x',k')\in P$, (ii)~ $k'\leq h(k)$, and (iii)~$|x'|\leq g(k)$ for some
functions $h$ and $g$. The function $g(k)$ is called the {\em size} of the kernel.
It is well-known \cite{DowneyFellows99,FlumGrohe06,Niedermeier06} that a decidable parameterized problem $P$ is fixed-parameter
tractable if and only if it has a kernel. Polynomial-size kernels are of
main interest, due to applications \cite{DowneyFellows99,FlumGrohe06,Niedermeier06}, but unfortunately many fixed-parameter tractable problems
have no such kernels unless  coNP$\subseteq$NP/poly, see, e.g., \cite{BDFH09}.\\

The following two definitions and Theorem \ref{res} were given by Bodlaender \textit{et al.} \cite{BodJanKra}.

\begin{defin}[Polynomial equivalence relation]\label{polequ}
An equivalence relation $\mathcal{R}$ on $\Sigma^\ast$ is called a {\em polynomial equivalence relation} if the following two conditions hold:
\begin{itemize}
   \item{There is an algorithm that given two strings $x,y\in\Sigma^\ast$ decides whether $x$ and $y$ belong to the same equivalence class
in $(|x|+|y|)^{O(1)}$ time.}
    \item{For any finite set $S\subseteq\Sigma^\ast$ the equivalence relation $\mathcal{R}$ partitions the elements of $S$ into at most
$(\max_{x\in S}|x|)^{O(1)}$ equivalence classes.}
\end{itemize}
\end{defin}

\begin{defin}[Cross-composition]\label{crocom}
Let $L\subseteq\Sigma^\ast$ be a problem and let $Q\subseteq\Sigma^\ast\times\mathbb{N}$ be a parameterized problem. We say that $L$
{\em cross-composes} into $Q$ if there is a polynomial equivalence relation $\mathcal{R}$ and an algorithm which, given $t$ strings $x_1,\dots,x_t$
belonging to the same equivalence class of $\mathcal{R}$, computes an instance $(x^\ast,k^\ast)\in\Sigma^\ast\times\mathbb{N}$ in time
polynomial in $\sum_{i=1}^t|x_i|$ such that:
\begin{itemize}
   \item {$(x^\ast,k^\ast)\in Q$ if and only if $x_i\in L$ for some $1\leq i\leq t$.}
  \item{$k^\ast$ is bounded by a polynomial in $\max_{i=1}^t|x_i|+\log t$.}
\end{itemize}

\end{defin}

\begin{theo}\label{res}
If some problem $L$ is $NP$-hard under Karp reductions and $L$ cross-composes into the
parameterized problem $Q$ then there is no polynomial kernel for $Q$ unless $NP\subseteq coNP/poly$.
\end{theo}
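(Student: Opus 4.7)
The plan is to derive a contradiction with the Fortnow--Santhanam distillation lower bound. Recall that an \emph{OR-compression} of a language $L\subseteq\Sigma^\ast$ into an $NP$ language $L'$ is a polynomial-time map sending $t$-tuples $(y_1,\dots,y_t)$ to instances of $L'$ of size polynomial in $\max_i|y_i|+\log t$, and accepting iff some $y_i\in L$. By Fortnow--Santhanam (extended from distillations to compressions by Dell--van Melkebeek), if an $NP$-hard language admits such a compression then $NP\subseteq coNP/poly$. It therefore suffices to combine the hypothesised cross-composition with a supposed polynomial kernel for $Q$ to build an OR-compression for $L$.

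Let $p(k)$ be a polynomial upper bound on the output size of the assumed kernel. Given $t$ instances $y_1,\dots,y_t$ of $L$ with $s=\max_i|y_i|$, first partition them into $r\le s^{O(1)}$ classes $C_1,\dots,C_r$ according to $\mathcal{R}$ (Definition~\ref{polequ}); this takes polynomial time by the first property of $\mathcal{R}$. For each class $C_j$ of size $t_j$, feed its members to the cross-composition to obtain $(x^{\ast}_j,k^{\ast}_j)$ with $k^{\ast}_j=(s+\log t_j)^{O(1)}$ such that $(x^{\ast}_j,k^{\ast}_j)\in Q$ iff some member of $C_j$ lies in $L$. Then apply the assumed kernel to produce $(x'_j,k'_j)$ with $|x'_j|\le p(k^{\ast}_j)=(s+\log t)^{O(1)}$. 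Finally, output the tuple $\bigl((x'_1,k'_1),\dots,(x'_r,k'_r)\bigr)$; its total size is at most $r\cdot (s+\log t)^{O(1)}=(s+\log t)^{O(1)}$, since $r\le s^{O(1)}$. Viewing this tuple as an instance of the $NP$ language $L'=\{(z_1,\dots,z_r):\text{some }z_j\in Q\}$ gives the desired OR-compression of $L$ into $L'$.

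The main technical obstacle is aligning the construction with the precise hypothesis of the distillation lower bound. One must verify that $L'$ lies in $NP$ (immediate, since $Q$ does and a disjunction preserves $NP$) and that the final output size genuinely remains polynomial in $s+\log t$ despite summing over all $s^{O(1)}$ equivalence classes; this relies crucially on the fact that the cross-composition's bound on $k^{\ast}$ is polynomial in the \emph{largest} input size plus only the \emph{logarithm} of the number of tuple elements, which behaves well under concatenation. With these checks in place, applying the Fortnow--Santhanam (resp.\ Dell--van Melkebeek) compression lower bound to the $NP$-hard language $L$ yields $NP\subseteq coNP/poly$. Taking the contrapositive of this implication gives exactly the statement of the theorem.
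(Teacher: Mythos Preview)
The paper does not give its own proof of this theorem; it is quoted (together with Definitions~\ref{polequ} and~\ref{crocom}) as a result of Bodlaender, Jansen, and Kratsch~\cite{BodJanKra} and used as a black box. Your sketch is essentially the argument from that paper: partition the inputs by the polynomial equivalence relation, cross-compose within each class, kernelize each resulting instance, and feed the concatenation into the Fortnow--Santhanam complementary-witness lemma.

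One point needs correcting. You write that $L'\in NP$ ``since $Q$ does'', but nothing in the statement of the theorem places $Q$ in $NP$, so this step is unjustified. Fortunately it is also unnecessary: the complementary-witness lemma that underlies the Fortnow--Santhanam bound (and the version used in~\cite{BodJanKra}) applies to OR-compressions into an \emph{arbitrary} target language, not just an $NP$ language. Dropping the unwarranted claim about $L'$ and invoking that stronger form of the lemma, your argument goes through and coincides with the original proof.
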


\section{Results}\label{sec:res}

\begin{theo}\label{nopolker}
\tcpk\space does not admit a polynomial kernel, unless $NP\subseteq coNP/poly$.
\end{theo}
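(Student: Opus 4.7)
The plan is to apply the cross-composition method of Theorem~\ref{res}, using as source problem $L$ the (unparameterized) decision version {\sc TestCover}$(p)$, which is NP-hard by~\cite{GarJoh79}.

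I would declare two \tc instances equivalent iff they have the same number of vertices $n$, the same number of tests $m$, and the same target value $p$. Since $n$, $m$, and $p$ are bounded by the input length, this gives a polynomial equivalence relation in the sense of Definition~\ref{polequ}: on any finite set of inputs it induces at most $(\max|x_i|)^{O(1)}$ classes, and equivalence is decidable in polynomial time. By padding the instance list with copies of a fixed no-instance I may also assume that the number of instances to be composed is a power of two, $t=2^s$ with $s=\lceil\log_2 t\rceil$.

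Given $t$ equivalent instances $(V_i,\mathcal{E}_i,p)$ with $|V_i|=n$ and $|\mathcal{E}_i|=m$, I would build a composed instance $(V^*,\mathcal{E}^*,k^*)$ as follows. The vertex set $V^*$ consists of $t$ disjoint copies $V_1',\dots,V_t'$ of the input vertex sets, together with a selector gadget of $O(s)$ extra vertices used to tag instance indices $i\in\{0,1\}^s$. For each $i\in[t]$ and each $E\in\mathcal{E}_i$, the family $\mathcal{E}^*$ would contain a \emph{lifted} test $\widehat{E}^{\,i}$ whose restriction to $V_i'$ is a copy of $E$ and whose restriction to the selector gadget encodes the bits of $i$; in addition, $\mathcal{E}^*$ would contain $O(s)$ selector tests separating any two distinct indices. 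I set $k^* = p + O(\log t)$, which is polynomial in $\max_i|x_i|+\log t$ as required by Definition~\ref{crocom}.

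For the forward direction, if instance $(V_{i^*},\mathcal{E}_{i^*})$ admits a cover $\{E_1,\dots,E_p\}$, the candidate cover of $V^*$ is $\{\widehat{E}_1^{\,i^*},\dots,\widehat{E}_p^{\,i^*}\}$ together with the $O(\log t)$ selector tests: pairs inside $V_{i^*}'$ are separated by the lifted instance tests, pairs across different copies by the selectors, and pairs inside the inactive copies $V_j'$ ($j\neq i^*$) by the calibrated interaction of the lifted tests with the selector gadget. For the reverse direction, one argues that the selector gadget forces any cover of size at most $k^*$ to essentially commit to a single index $i^*$, so that its projection to $V_{i^*}'$ yields a cover of $(V_{i^*},\mathcal{E}_{i^*})$ of size at most $p$. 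Combined with Theorem~\ref{res}, this implies the claim.

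The main technical obstacle is designing the lifted tests and the selector gadget so as to simultaneously (i) keep $k^*$ polynomial in $\max_i|x_i|+\log t$, (ii) ensure in the forward direction that the $p+O(\log t)$ chosen tests actually separate every pair in the inactive copies $V_j'$, and, most delicately, (iii) rule out in the reverse direction \emph{mixed} covers that draw instance tests from several $\mathcal{E}_i$ at once. It is step~(iii) that drives the construction and typically forces the introduction of auxiliary "witness" vertices and tests that make any mixed solution exceed the budget $k^*$; getting all three properties to hold simultaneously is the real content of the proof.
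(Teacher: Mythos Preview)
Your overall framework---cross-compose {\sc TestCover}$(p)$ into \tcpk\ via Theorem~\ref{res}---matches the paper, and your equivalence relation is fine. But the architecture you sketch has a genuine gap at step~(ii), and it is not a technicality: with $t$ \emph{disjoint} copies $V_1',\dots,V_t'$ the forward direction cannot work with budget $p+O(\log t)$.

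Concretely, suppose instance $i^\ast$ has a cover and you take the $p$ lifted tests $\widehat{E}_1^{\,i^\ast},\dots,\widehat{E}_p^{\,i^\ast}$ together with the $O(\log t)$ selector tests. The lifted tests, as you describe them, restrict to $V_{i^\ast}'$ and the gadget only; they do not touch any $V_j'$ with $j\neq i^\ast$. Hence all separation inside an inactive copy $V_j'$ must come from the $O(\log t)$ selector tests. But $O(\log t)$ tests induce at most $2^{O(\log t)}$ classes, whereas each $V_j'$ has $n$ vertices and $n$ is independent of $t$. So for $n$ large relative to $t$ (which you cannot exclude) the selector tests simply cannot separate all pairs inside $V_j'$. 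No ``calibrated interaction'' fixes this without either blowing up $k^\ast$ or letting lifted tests from instance $i^\ast$ act nontrivially on every other copy---at which point you have abandoned the disjoint-copies model.

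The paper sidesteps this entirely by \emph{not} taking disjoint copies. Since the equivalence relation forces all instances to have the same number $n$ of vertices, it uses a \emph{single} shared vertex set $\{v_1,\dots,v_n\}$ for all $t$ instances; there are no inactive copies to separate. The work then shifts completely to your step~(iii): a gadget of $O(p\log t)$ extra vertices and $2l$ forced tests is designed so that the remaining $p$ tests must all come from a single $\widetilde{\mathcal{E}}_r$, and those $p$ tests (restricted to $\{v_1,\dots,v_n\}$) are exactly a test cover for $Q_r$. This identification of vertex sets is the key idea you are missing.
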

\begin{proof}
We will use Theorem \ref{res} to show the result, hence we need an $NP$-hard problem $L$ which cross-composes into \tcpk.
For this purpose, it is possible to use \tc.
An instance of this problem is a triple $(V,\mathcal{E},p)$. We say that two triples $(V_1,\mathcal{E}_1,p_1)$
and $(V_2,\mathcal{E}_2,p_2)$ are equivalent if $|V_1|=|V_2|$ and $p_1=p_2$.

It is not difficult to see that this defines a polynomial
equivalence relation on \tc\space (see Definition \ref{polequ}). To see that the second condition of Definition \ref{polequ} holds, observe that the relation
thus defined partitions a finite set $S$ into at most $O(n\cdot m)$ equivalence classes, assuming that for every $(V,\mathcal{E},p)\in S$ we have
that $|V|\leq n$ and $|\mathcal{E}|\leq m$ (since $p\leq|\mathcal{E}|$).

Consider now $t$ instances $Q_1,\dots,Q_t$ of \tc, belonging to the same equivalence class. For every $i\in[t]$,
let $V=\{v_1,\dots,v_n\}$ be the set of vertices of $Q_i$, $\mathcal{E}_i=\{E_1^i,\dots,E_{m_i}^i\}$ its set of tests and
$p$ be the upper bound on the size of the solution.
Let $l=2\lceil(\log t)/2\rceil$.
We will construct an instance $Q^\ast$ of \tcpk\space such that $Q^\ast$ has a solution of size at most $k=2l+p$
if and only if at least one of $Q_1,\dots,Q_t$ has a solution of size at most $p$.
This will be enough to prove our result (see Definition \ref{crocom} and Theorem \ref{res}). \\

The vertex set of $Q^\ast$ is defined in the following way:
$$V(Q^\ast)=\{v_1,\dots,v_n\}\cup(\cup_{j=1}^{2l}\{y_j,x_1^j,\dots,x_p^j\})\cup\{a_1,\dots,a_l\}.
$$
The tests in $Q^\ast$ are defined as follows. Consider some $i\in[t]$ and let $(i_1,\dots,i_l)$ be its binary representation
(e.g. if $i=11$ and $l=6$, $(i_1,\dots,i_l)=(1,1,0,1,0,0)$). For every $i$, consider the following family of sets $\mathcal{S}_i$
(where subscripts are taken modulo $k$):
$$\mathcal{S}_i=\bigl\{\{x_{h}^1,x_{h+i_1}^2,x_{h}^3,x_{h+i_2}^4,x_{h}^5,x_{h+i_3}^6,\dots,x_{h}^{2l-1},x_{h+i_l}^{2l}\}:
h=1,\dots,p\bigr\}.
$$
Define $\mathcal{E}(Q^\ast)=\mathcal{E}^\ast\cup\widetilde{\mathcal{E}}_1\cup\dots\cup\widetilde{\mathcal{E}}_t$, where
$$\mathcal{E}^\ast=\bigl\{\{a_j,y_{2j-1},x_1^{2j-1},x_2^{2j-1},x_3^{2j-1},\dots,x_p^{2j-1}\},\{a_j,y_{2j},x_1^{2j},x_2^{2j},x_3^{2j}
\dots,x_p^{2j}\}:j\in[l]\bigr\}
$$
and
$$\widetilde{\mathcal{E}}_i=\{(S\cup E_j^i):j\in[m_i],S\in\mathcal{S}_i\}.
$$
This completes the definition of $Q^\ast$.

Assume that there is a solution to $Q^\ast$ using at most $2l+p$ tests. Since $a_i$ and $y_{2i-1}$ need to be separated and
similarly for $a_i$ and $y_{2i}$ for all $i$, all $2l$ tests in $\mathcal{E}^\ast$ must be used (because they are the only tests
where $a_i$, $y_{2i-1}$ and $y_{2i}$ appear); notice also that they are enough to separate them.
Moreover, as every set in $\widetilde{\mathcal{E}}_i$ (for all $i$) only intersects $\mathcal{X}_i=\{x_1^i,x_2^i,\dots,x_p^i\}$
in one vertex and as all vertices in $\mathcal{X}_i$ have to be separated from $y_i$, the tests not in $\mathcal{E}^\ast$ all have to contain
a distinct vertex from $\mathcal{X}_i$
\footnote{Notice that this ensures that a solution to $Q^\ast$ has \textit{at least} $2l+p$ vertices. Therefore we are actually showing that
$Q^\ast$ has a solution of size exactly $4\lceil(\log t)/2\rceil+p$
if and only if at least one of $Q_1,\dots,Q_t$ has a solution of size at most $p$.}.

Consider now a test $E$ in $\widetilde{\mathcal{E}}=\bigcup_{j=1}^t\widetilde{\mathcal{E}}_j$. Its intersection with $\mathcal{X}_i
\cup\mathcal{X}_{i+1}$, for every $i\in\{1,3,\dots,2l-1\}$, contains only vertices which are in $S$ for some $S\in\bigcup_{j=1}^t\mathcal{S}_j$.

Suppose $S\in\mathcal{S}_r$ for some $r$ (this ensures that $E\in\widetilde{\mathcal{E}}_r$)
and suppose that it is equal to
$$\{x_{h}^1,x_{h+r_1}^2,\dots,x_{h}^{2l-1},x_{h+r_l}^{2l}\},$$
where $(r_1,\dots,r_l)$ is the binary representation of $r$ and $h\in[p]$.

The intersection with $\mathcal{X}_i\cup\mathcal{X}_{i+1}$ is given by the black
vertices in next figure if $r_{\frac{i+1}{2}}=0$:

\begin{equation*}
\begin{tikzpicture}

\node (pun1) at (3,0) {$\dots$};
\node (pun2) at (6,0) {$\dots$};

\node (pun3) at (3,-1) {$\dots$};
\node (pun4) at (6,-1) {$\dots$};

\tikzstyle{every node}=[draw,circle,fill=white,minimum size=4pt,inner sep=0pt]

\foreach \number in {1,2}
{
\node (N1 + \number) at (\number,0) [label=above:$x_\number^i$] {};
}

\node[fill=black] (N1quattro) at (4,0) [label=above:$x_{h}^i$] {};
\node (N1cinque) at (5,0) [label=above:$x_{h+1}^i$] {};

\node (N1sette) at (7,0) [label=above:$x_{p-1}^i$] {};
\node (N1otto) at (8,0) [label=above:$x_p^i$] {};

\foreach \number in {1,2}
{
\node (N2 + \number) at (\number,-1) [label=below:$x_\number^{i+1}$] {};
}

\node[fill=black] (N2quattro) at (4,-1) [label=below:$x_{h}^{i+1}$] {};
\node (N2cinque) at (5,-1) [label=below:$x_{h+1}^{i+1}$] {};

\node (N2sette) at (7,-1) [label=below:$x_{p-1}^{i+1}$] {};
\node (N2otto) at (8,-1) [label=below:$x_p^{i+1}$] {};

\end{tikzpicture}
\end{equation*}

If, on the contrary, $r_{\frac{i+1}{2}}=1$, the intersection is given by the black vertices in next figure:

\begin{equation*}
\begin{tikzpicture}

\node (pun1) at (3,0) {$\dots$};
\node (pun2) at (6,0) {$\dots$};

\node (pun3) at (3,-1) {$\dots$};
\node (pun4) at (6,-1) {$\dots$};

\tikzstyle{every node}=[draw,circle,fill=white,minimum size=4pt,inner sep=0pt]

\foreach \number in {1,2}
{
\node (N1 + \number) at (\number,0) [label=above:$x_\number^i$] {};
}

\node[fill=black] (N1quattro) at (4,0) [label=above:$x_{h}^i$] {};
\node (N1cinque) at (5,0) [label=above:$x_{h+1}^i$] {};

\node (N1sette) at (7,0) [label=above:$x_{p-1}^i$] {};
\node (N1otto) at (8,0) [label=above:$x_p^i$] {};

\foreach \number in {1,2}
{
\node (N2 + \number) at (\number,-1) [label=below:$x_\number^{i+1}$] {};
}

\node (N2quattro) at (4,-1) [label=below:$x_{h}^{i+1}$] {};
\node[fill=black] (N2cinque) at (5,-1) [label=below:$x_{h+1}^{i+1}$] {};

\node (N2sette) at (7,-1) [label=below:$x_{p-1}^{i+1}$] {};
\node (N2otto) at (8,-1) [label=below:$x_p^{i+1}$] {};

\end{tikzpicture}
\end{equation*}

These are the only possible types of intersection.

Notice that, if $E$ is part of the solution, in order to cover all vertices in $\mathcal{X}_i\cup\mathcal{X}_{i+1}$ (for every $i$)
it is possible to use only tests in $\widetilde{\mathcal{E}}_r$. Otherwise, if a test in $\widetilde{\mathcal{E}}_s$
is used (with $r\neq s$),
there exists $i^\ast$ such that $r_{i^\ast} \neq s_{i^\ast}$ and it is impossible to cover all vertices
in $\mathcal{X}_{i^\ast}\cup\mathcal{X}_{i^\ast+1}$.
Therefore, if there is a solution to $Q^\ast$ using $2l+p$ tests, then $2l$ tests are taken from $\mathcal{E}^\ast$ and $p$ tests from
$\widetilde{E}_r$ for some $r$. This ensures that $Q_r$ has a solution using at most $p$ tests. \\

To prove the opposite direction, assume that $Q_r$ has a solution of size at most $p$. Using all sets in $\mathcal{E}^\ast$ and carefully choosing
$p$ sets from $\widetilde{\mathcal{E}}_r$
\footnote{The $p$ tests from $\widetilde{\mathcal{E}}_r$ have to contain the tests used in a solution to $Q_r$ and tests $S$ which are chosen
accordingly to the binary representation of $r$, in order to cover all vertices in $\bigcup_{j=1}^{2l}{\mathcal{X}_j}$.},
it is possible to obtain a solution to $Q^\ast$ with at most $2l+p$ tests.
This completes the proof.
\end{proof}

\begin{defin}
The generic parameterized problem \tcpr\space is defined as \tcp, except that for an instance $(V,\mathcal{E},p,k)$
we have that $r\leq|V|$ and $|E|\leq r$ for every $E\in\mathcal{E}$.
\end{defin}

As the next theorem shows, this additional condition enables us to obtain a kernel which is linear in the number of vertices
in the case of \tcprk.

\begin{theo}\label{polver}
\tcprk\space admits a kernel with at most $r\cdot k-(\lfloor\log r\rfloor-1)\cdot r$ vertices.
\end{theo}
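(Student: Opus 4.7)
The plan is to reduce the kernelization to a purely structural bound on yes-instances. Setting $\ell=\lfloor\log r\rfloor$, I would show that any instance of \tcprk{} that admits a test cover of size at most $k$ must satisfy $|V|\le r(k-\ell+1)$, which is exactly the promised quantity $rk-(\ell-1)r$. Once this is established, the kernelization algorithm is immediate: if $|V|$ already meets the bound, return the input unchanged; otherwise return a canonical trivial \textsc{No}-instance. Both checks are polynomial, so this is a valid kernelization.

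I would prove the structural bound by induction on $k\ge\ell$. For the base case $k=\ell$, the test cover assigns pairwise distinct binary signatures of length $k$ to the vertices of $V$, so $|V|\le 2^k=2^\ell\le r=r(k-\ell+1)$. For the inductive step with $k>\ell$, fix a test cover $\mathcal{T}=\{T_1,\ldots,T_k\}$ with $|T_i|\le r$ and split $V=(V\cap T_k)\cup(V\setminus T_k)$. The first part has size at most $r$ by the size restriction on $T_k$. For the second part, any pair of vertices in $V\setminus T_k$ has signature coordinate $0$ at $T_k$, so $T_k$ does not separate them; they must therefore be separated by one of $T_1,\ldots,T_{k-1}$. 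Hence these $k-1$ tests, each still of size at most $r$, form a test cover of $V\setminus T_k$, and the inductive hypothesis gives $|V\setminus T_k|\le r(k-\ell)$. Combining the two estimates yields $|V|\le r+r(k-\ell)=r(k-\ell+1)$, as required.

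There is very little computation to grind through here; the one point requiring care is the observation that $\mathcal{T}\setminus\{T_k\}$ genuinely restricts to a test cover of $V\setminus T_k$, which is what drives the induction. A minor cosmetic concern is the regime $k<\ell$, where the advertised formula can be nonpositive; but in that range the naive signature-counting bound $|V|\le 2^k<r$ already makes the instance small enough to serve as its own kernel, so the overall reduction remains correct.
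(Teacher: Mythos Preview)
Your argument is correct and reaches the same bound as the paper, but the packaging is a little different. The paper works directly with the partition into equivalence classes induced by a partial test cover: it notes that adding a test can at most double the number of classes, and that a test of size at most $r$ can nontrivially intersect at most $r$ existing classes, so it creates at most $\min\{t,r\}$ new classes when $t$ classes are already present. This yields $n\le 2^{\lfloor\log r\rfloor}+(k-\lfloor\log r\rfloor)r\le rk-(\lfloor\log r\rfloor-1)r$ in one line. Your induction dualizes this: rather than adding tests and tracking class growth, you peel off $T_k$, bound $|V\cap T_k|\le r$, and recurse on $V\setminus T_k$ with the remaining $k-1$ tests. The arithmetic is identical, and both arguments implicitly require $k\ge\lfloor\log r\rfloor$ (which you flag and the paper leaves tacit). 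Your version is arguably slightly more elementary, since it never needs to introduce the class/partition language; the paper's version makes the ``each test buys at most $r$ new distinctions'' intuition a bit more explicit.
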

\begin{proof}
Notice that $s$ tests can create at most $2^s$ classes: more specifically, adding a test to a partial test cover which induces $t$ classes
produces a partial test cover with at most $2t$ classes. This happens because in order to produce a new class, the test must contain a subset
$C'$ of a class $C$ such that $C'$ is not empty and it does not coincide with $C$ and if this happens for every class, the number of classes doubles.
At the same time, since every test contains at most $r$ vertices, this can happen with at most $r$ different classes.
Therefore, given a partial test cover which induces $t$ classes, adding a new test can create at most $\min\{t,r\}$ new classes.
Using $k$ tests it is possible to create at most $2^{\lfloor\log r\rfloor}+(k-\lfloor\log r\rfloor)r$ classes,
since the first $\lfloor\log r\rfloor$ tests can create at most $2^{\lfloor\log r\rfloor}$ classes and
every other test increases the number of classes by at most $r$.
It follows that we have a {\sc No}-instance unless
$$n\leq 2^{\lfloor\log r\rfloor}+(k-\lfloor\log r\rfloor)r\leq r+(k-\lfloor\log r\rfloor)r=kr-(\lfloor\log r\rfloor -1)r.
$$
\end{proof}

\begin{cor}\label{cor1}
\tcprk\space admits a polynomial kernel when $r$ is a constant.
\end{cor}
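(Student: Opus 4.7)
The plan is to apply Theorem \ref{polver} directly and verify that the resulting kernel size is polynomial in $k$ once $r$ is treated as a constant. First I would invoke Theorem \ref{polver} on the given instance: in polynomial time we either decide that the instance is a \textsc{No}-instance or we obtain an equivalent instance on at most
\[
N \;\leq\; r\cdot k - (\lfloor\log r\rfloor - 1)\cdot r
\]
vertices. With $r$ fixed, this bound is $O(k)$, so the vertex set of the reduced instance is already linear in $k$.

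Next I would bound the number of tests in the reduced instance. Since \tcpr{} requires every test to have size at most $r$, the tests are distinct subsets of $V$ of size at most $r$, of which there are at most $\sum_{i=0}^{r}\binom{N}{i} = O(N^{r})$. With $r$ constant and $N = O(k)$, this yields $O(k^{r})$ tests, which is polynomial in $k$. Hence the total encoding length of the reduced instance is polynomial in $k$, giving a polynomial kernel.

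The only small subtlety, and the one place I would take care, is to ensure that the kernelization is actually \emph{computable} in polynomial time and not merely an existence statement: the proof of Theorem \ref{polver} gives a vertex bound, but for a kernelization we need an explicit polynomial-time reduction producing an instance of size polynomial in $k$. This is routine, however: if $n$ exceeds the bound from Theorem \ref{polver} we output a fixed trivial \textsc{No}-instance, and otherwise we output the instance unchanged (possibly after discarding duplicate tests, which takes polynomial time). Since $r$ is constant, both conditions can be checked in polynomial time, and correctness follows from Theorem \ref{polver}. There is no real obstacle; the corollary is essentially immediate once the vertex bound is combined with the observation that constantly many vertices per test bound the number of tests polynomially.
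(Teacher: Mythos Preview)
Your proof is correct and follows essentially the same route as the paper: invoke Theorem~\ref{polver} to bound the number of vertices by $O(k)$ when $r$ is constant, then bound the number of tests by $\sum_{s\le r}\binom{N}{s}=O(k^r)$. The paper's argument is the same (using the slightly sharper estimate $\sum_{s=1}^r\binom{rk}{s}\le\sum_{s=1}^r(erk/s)^s\in O(k^r)$), and your extra remark about explicitly outputting a trivial \textsc{No}-instance when the vertex bound is exceeded is a reasonable clarification that the paper leaves implicit.
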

\begin{proof}
Theorem \ref{polver} ensures the existence of a kernel with at most $r\cdot k-(\lfloor\log r\rfloor-1)\cdot r$ vertices. This means that the number
$m$ of tests is at most $$\sum_{s=1}^r\binom{rk}{s}\leq\sum_{s=1}^r\left(\frac{erk}{s}\right)^s\in O(r(erk)^r)=O(k^r).$$
\end{proof}

\begin{cor}\label{cor2}
Consider an instance $(V,\mathcal{E},k)$ of \tcpk\space and let $r=\max_{E\in\mathcal{E}}|E|$. If $r\in O(n^{1-\varepsilon})$ for some
$\varepsilon>0$, then the problem admits a kernel which is polynomial in the number of vertices.
\end{cor}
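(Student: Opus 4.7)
My plan is to reduce directly to Theorem \ref{polver} after a routine pre-processing step that bounds $k$ in terms of $n$.

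First, I would argue that we may assume $k\leq n-1$. Indeed, if $k\geq n-1$ then a test cover of size at most $k$ exists if and only if \emph{some} test cover exists (any test cover can be pruned to size at most $n-1$), and the latter is polynomial-time decidable by checking whether $\mathcal{E}$ itself separates every pair of vertices (an $O(n^{2}m)$ check). In that case, replace the instance by a trivial yes- or no-instance.

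Under the assumption $k\leq n-1$, combined with the hypothesis $r\leq c\,n^{1-\varepsilon}$ for some constant $c$, Theorem \ref{polver} forces any yes-instance to satisfy
\[
n \;\leq\; rk-(\lfloor\log r\rfloor-1)r \;\leq\; rk \;\leq\; c\,n^{1-\varepsilon}\,(n-1) \;=\; O(n^{2-\varepsilon}).
\]
If the original number of vertices exceeds this bound, we output a trivial no-instance; otherwise the instance already has at most $O(n^{2-\varepsilon})$ vertices, which is polynomial in $n$. This is the entire reduction.

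The main obstacle I anticipate is to also bound the \emph{number of tests} in the kernel polynomially in $n$: for non-constant $r$ the naive estimate $\binom{n}{\leq r}$ on the number of distinct tests is super-polynomial in $n$. The way to handle this should follow the counting in the proof of Corollary \ref{cor1}, with the crucial extra ingredients that, after the pre-processing, the kernel's vertex count is $O(n^{2-\varepsilon})$, each test has size at most $r=O(n^{1-\varepsilon})$, and $k\leq n-1$; these three facts jointly pull the product $rk$ and the exponent $r$ into the polynomial range, which is exactly what is needed for the Corollary~\ref{cor1}-style counting to deliver a polynomial bound in $n$.
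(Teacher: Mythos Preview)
There is a genuine gap. A kernel must have size bounded by a function of the \emph{parameter} $k$, not of the original input size $n$. Your chain
\[
n \;\leq\; rk \;\leq\; c\,n^{1-\varepsilon}(n-1) \;=\; O(n^{2-\varepsilon})
\]
bounds $n$ only in terms of $n$ itself and is vacuously satisfied for all sufficiently large $n$; it says nothing about $k$. Thus you have not shown that the number of vertices is bounded by any function of $k$, let alone a polynomial one. The step replacing $k$ by $n-1$ is exactly where the dependence on the parameter is discarded.

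The paper's argument keeps $k$ in the inequality. From Theorem~\ref{polver} one has $n\le rk$; combining this with $r\le c\,n^{1-\varepsilon}$ gives $n\le ck\,n^{1-\varepsilon}$, hence $n^{\varepsilon}\le ck$ and $n\le (ck)^{1/\varepsilon}$. This bounds the number of vertices polynomially in $k$, which is what ``a kernel which is polynomial in the number of vertices'' is intended to mean. Your proposed bound on the number of tests is also off target: with $r=\Theta(n^{1-\varepsilon})$ appearing as the exponent, the Corollary~\ref{cor1}-style estimate $\sum_{s\le r}\binom{rk}{s}$ is super-polynomial in both $n$ and $k$, so that route cannot close the gap; the paper, accordingly, makes no claim about the number of tests in this corollary.
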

\begin{proof}
By Theorem \ref{polver}, $n<kr$. Now, for some positive constant $c$ it holds that $kr<kcn^{1-\varepsilon}$, which ensures that
$n^{\varepsilon}<kc$, that is $n<(kc)^{\frac{1}{\varepsilon}}$.
\end{proof}

The proof of Theorem \ref{nopolker} can be modified to show the non-existence of a polynomial kernel for the parameterized problem \tcpn.
However, it is possible to obtain this result from a more general theorem. First of all, consider a definition of dual problem equivalent to the one
given by Chen \textit{et al.} \cite{CheFerKanXia}:

\begin{defin}\label{def:dual}
Let $P$ be a parameterized problem, i.e., $P\subseteq \Sigma^* \times
\mathbb{N}$. A mapping $s:\Sigma^*\rightarrow\mathbb{N}$ is a {\em size function} for $P$ if
\begin{itemize}
\item{$0\leq k\leq s(x)$ for every $(x,k)\in P$, and}
\item{$s(x)\leq |x|$ for every $x\in  \Sigma^*$.}
\end{itemize}
The {\em dual} $P_d$ of a problem $P$ respectively to the size function $s$ is the problem corresponding
to the language (i.e., the set of {\sc Yes}-instances)
$P_d=\{(x,s(x)-k):\ (x,k)\in P\}$.
$P_d$ is also called the $s$-{\em dual} of $P$.
\end{defin}

Notice that the $s$-dual of the $s$-dual of a problem is again the original problem.

\begin{theo}\label{dualnopolker}
Given a parameterized problem $P$, if $P$ admits a cross-composition from a NP-hard problem $L$, let $(x,k)$ be the instance
which is associated to the $t$ strings $x_1,\dots,x_t$ by the cross-composition algorithm. If there exists a size function
$s(x)$ which is bounded by a polynomial in
$\max_{i=1}^t|x_i|+\log t$,
then the $s$-dual problem $P_d$ does not
admit a polynomial kernel, unless $NP\subseteq coNP/poly$.
\end{theo}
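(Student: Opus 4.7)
The plan is to use Theorem \ref{res} directly: to show that $P_d$ has no polynomial kernel unless $NP\subseteq coNP/poly$, it suffices to exhibit a cross-composition from the $NP$-hard problem $L$ into $P_d$. The key observation is that the given cross-composition of $L$ into $P$ can be ``dualized'' essentially for free, by replacing the parameter $k$ with $s(x)-k$, since by definition of the $s$-dual an instance $(x,k)$ lies in $P$ if and only if $(x,s(x)-k)$ lies in $P_d$.

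Concretely, I would proceed as follows. Reuse the same polynomial equivalence relation $\mathcal{R}$ on $\Sigma^\ast$ that witnesses the cross-composition of $L$ into $P$. Given $t$ strings $x_1,\dots,x_t$ belonging to the same class of $\mathcal{R}$, run the cross-composition algorithm for $P$ to obtain an instance $(x,k)$ in time polynomial in $\sum_{i=1}^t |x_i|$, with the property that $(x,k)\in P$ iff $x_i\in L$ for some $i$. Then output the instance $(x,k^\ast)$ where $k^\ast := s(x)-k$. Since $s$ is a size function we have $0\le k\le s(x)$, so $k^\ast$ is a nonnegative integer, and $s$ itself is computable in polynomial time from $x$ (it can be assumed so; otherwise it can be absorbed into the construction).

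It remains to verify the two conditions of Definition \ref{crocom}. The equivalence condition follows immediately from the definition of the $s$-dual: $(x,k^\ast)=(x,s(x)-k)\in P_d$ iff $(x,k)\in P$ iff some $x_i\in L$. The parameter bound is exactly the hypothesis of the theorem: $k^\ast = s(x)-k \le s(x)$, and by assumption $s(x)$ is bounded by a polynomial in $\max_{i=1}^t |x_i|+\log t$, so the same holds for $k^\ast$. Applying Theorem \ref{res} to the resulting cross-composition of the $NP$-hard problem $L$ into $P_d$ yields the conclusion.

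The only subtle point, and what I expect to be the main thing to check carefully, is the size-function hypothesis: one must make sure that after the cross-composition algorithm has produced $(x,k)$ the quantity $s(x)$ really is controlled by a polynomial in $\max_i|x_i|+\log t$ rather than in $\sum_i|x_i|$. This is precisely what the theorem assumes, but in applications (such as the intended one to \tcpn) one has to inspect the cross-composition of Theorem \ref{nopolker} and confirm that $s(x)=|V(Q^\ast)|$ is polynomially bounded in $\max_i|x_i|+\log t$, which it is because $n$ and $p$ are common to all equivalent instances and $l=2\lceil(\log t)/2\rceil$ is logarithmic in $t$. Once that is in place, the argument above is a direct application of Theorem \ref{res}.
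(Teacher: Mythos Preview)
Your proposal is correct and follows exactly the same approach as the paper: dualize the given cross-composition by replacing the parameter $k$ with $s(x)-k$, check that the resulting instance lies in $P_d$ iff some $x_i\in L$ by the definition of the $s$-dual, and observe that the new parameter is bounded by $s(x)$, which by hypothesis is polynomial in $\max_i|x_i|+\log t$; then invoke Theorem~\ref{res}. Your write-up is in fact more careful than the paper's (you explicitly reuse the equivalence relation $\mathcal{R}$, note non-negativity of $k^\ast$, and flag the computability of $s$), but the underlying argument is identical.
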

\begin{proof}
We will show that if the hypothesis hold, $P_d$ admits a cross-composition from the same problem $L$. Hence we conclude using Theorem \ref{res}.
By Definition \ref{crocom}, we know that there exists $i$ such that $x_i\in L$ if and only if $(x,k)\in P$.
By definition of $s$-dual problem, this is equivalent to $(x,s(x)-k)\in P_d$.
This provides a cross-composition for $P_d$, as long as the parameter $k'=s(x)-k$ is a polynomial in $\max_{i=1}^t|x_i|+\log t$,
that is as long as $s(x)$ is a polynomial in $\max_{i=1}^t|x_i|+\log t$.
\end{proof}

Now, notice that \tcpn\space is the $s$-dual problem of \tcpk, with $s(Q)=n$. Notice also that in Theorem \ref{nopolker},
$s(Q^*)=n+2l\cdot(k+1)+l$, which is a polynomial in $n+l\leq\max_{i=1}^t|Q_i|+\log t$. Hence, the hypothesis of Theorem \ref{dualnopolker}
holds and there is no polynomial kernel for \tcpn. Thus, we have the following:

\begin{theo}
The problem \tcpn{}
does not admit a polynomial kernel, unless $NP\subseteq coNP/poly$.
\end{theo}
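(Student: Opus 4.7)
The plan is to obtain this as an immediate corollary of Theorem \ref{dualnopolker}, applied to the cross-composition constructed in Theorem \ref{nopolker}. First I would identify the correct size function: the natural choice for \tc\space is $s(Q)=|V|$, so that an instance $(Q,n-k)$ of \tcpn\space is precisely $(Q,s(Q)-k)$ for the corresponding instance $(Q,k)$ of \tcpk. The two conditions of Definition \ref{def:dual} are trivial to verify: $s(Q)=|V|\leq|Q|$, and the parameter $k$ of any instance of \tcpk\space is non-negative and bounded by $n$ (in fact $k\geq\lceil\log n\rceil$ suffices for Yes-instances, as discussed earlier in the paper).

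Next I would check that the size-function hypothesis of Theorem \ref{dualnopolker} holds for the cross-composition of Theorem \ref{nopolker}. That hypothesis asks for $s(Q^\ast)$ to be polynomial in $\max_{i=1}^t|Q_i|+\log t$, where $Q^\ast$ is the instance built from the $t$ input strings $Q_1,\dots,Q_t$. A direct vertex count in the construction gives $s(Q^\ast)=|V(Q^\ast)|=n+2l(p+1)+l$, with $l=2\lceil(\log t)/2\rceil=O(\log t)$ and with $n,p$ both bounded by $\max_i|Q_i|$; this expression is clearly polynomial in $\max_i|Q_i|+\log t$.

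Finally, I would invoke Theorem \ref{dualnopolker} with the NP-hard source problem $L=$ \tc, the parameterized problem $P=$ \tcpk, and the size function $s$ above. Since the cross-composition required by that theorem is exactly the one established in Theorem \ref{nopolker}, the conclusion is that the $s$-dual $P_d=$ \tcpn\space admits no polynomial kernel unless $NP\subseteq coNP/poly$, which is precisely the statement.

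There is no real obstacle in carrying this out: all of the substantive work has already been done in Theorems \ref{nopolker} and \ref{dualnopolker}, and the remaining step is a bookkeeping verification that the sizes align. The only mild subtlety is choosing the size function correctly so that the $s$-dual matches the intended parameterization TestCover$(n-k,k)$ rather than some other shifted version; once $s(Q)=n$ is fixed, everything follows mechanically.
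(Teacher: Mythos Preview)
Your proposal is correct and matches the paper's own argument essentially line for line: the paper also takes $s(Q)=n$, observes that $|V(Q^\ast)|=n+2l(p+1)+l$ is polynomial in $\max_i|Q_i|+\log t$, and then invokes Theorem \ref{dualnopolker} together with the cross-composition from Theorem \ref{nopolker}. There is nothing to add; your bookkeeping (in particular the vertex count using $p$ rather than $k$) is actually slightly cleaner than the paper's.
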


\section{Open Problems}\label{sec:op}

In Corollaries \ref{cor1} and \ref{cor2} we give  sufficient conditions for the existence of kernels of polynomial size and polynomial in the number of vertices, respectively. It would be interesting to obtain extensions of the corollaries.

\end{document}